\documentclass[10pt,twocolumn]{article}
\usepackage[margin=0.5in]{geometry}

\usepackage{amsmath,amssymb,amsthm,amsfonts,mathtools,bm}
\usepackage{bbm}
\usepackage{graphicx}
\usepackage{enumitem}
\usepackage{hyperref}
\usepackage{booktabs}
\usepackage{algorithm}
\usepackage{algorithmic}
\usepackage{array}
\usepackage{multirow}
\usepackage{mathrsfs}
\usepackage{xcolor}

\newtheorem{theorem}{Theorem}

\newtheorem{definition}{Definition}
\newtheorem{remark}{Remark}

\usepackage{pgfplots}
\pgfplotsset{compat=1.18}
\usepackage{tikz}
\usetikzlibrary{
    positioning,
    arrows.meta,
    shapes.geometric,
    fit,
    backgrounds,
    calc
}
\usepackage{adjustbox}

\begin{document}

\title{Causal Mirage Equilibrium in Agentic Machine Intelligence}
\author{Hamidou Tembine \thanks{Department of EECS, School of Engineering, UQTR, Canada} \thanks{Learning \& Game Theory Laboratory, TIMADIE} \thanks{Contact: tembine@ieee.org}} \date{October 5, 2025}
\maketitle

\begin{abstract}
Classical game-theoretic solution concepts assume that agents' internal representations remain causally linked to external states. In generative machine intelligence, this assumption fails: semantic representations can decouple from physical reality, stabilizing into self-reinforcing, operationally robust configurations. This paper introduces the  risk-sensitive mean-field-type \emph{Causal Mirage Equilibrium} (CME), a solution refined concept formalizing endogenous epistemic decoupling within a risk-sensitive mean-field-type game. Unlike Nash, Bayesian, self-confirming, or robust equilibria, CME stabilizes detached semantic representation manifolds rather than optimization strategies or observational beliefs. To quantify this phenomenon, we define a dimensionless parameter, the \emph{mirage intensity}  which measures semantic detachment as the ratio of an agent's endogenous reinforcement-confidence product to its causally grounded reality alignment. Under compactness, convexity, and continuity assumptions on the game primitives, we prove the existence of an CME using the Kakutani-Glicksberg-Fan fixed-point theorem on the space of joint probability measures. We establish a non-linear mirage bifurcation theorem: when endogenous reinforcement dominates causal grounding, the unique grounded fixed point becomes unstable, giving rise to a stable invariant manifold of ungrounded states. Our results demonstrate that synthetic consensus and causally detached semantic configurations are not transient optimization anomalies, but structurally stable, risk-aware attractors generated by recursive autoregressive dynamics.
\end{abstract}

\section{Introduction}

A \emph{machine mirage} \cite{mirage} in agentic machine intelligence is a self-sustaining topological trap where an agent's internal latent trajectories become completely decoupled from the external reality-generating process, yet remain operational, internally consistent, and robust to standard optimization constraints. This phenomenon represents a fundamental structural failure: pathologically decoupled representation states stabilize into an epistemic illusion.

Traditional game-theoretic solution concepts are blind to this structural decay because they lack a state variable that tracks the causal legitimacy of representations. They operate under the implicit assumption that an agent's internal symbols remain permanently and perfectly anchored to external reality:
\begin{itemize}
    \item \textbf{Nash equilibrium} \cite{nash} stabilizes mutually best-responding actions under fixed payoff structures, ignoring how those payoffs are mentally or semantically constructed.
    \item \textbf{Bayesian equilibrium} \cite{harsanyi} stabilizes beliefs conditioned on incoming signals but lacks the machinery to detect whether the signals themselves have been fabricated or distorted by a decoupled internal model.
    \item \textbf{Self-confirming equilibrium} \cite{fudenberg1993} ensures only that beliefs are consistent with local path-dependent observations, making it vulnerable to circular loops where an agent observes its own simulated outputs and mistakes them for external evidence.
    \item \textbf{Robust equilibrium} \cite{ben2009robust} optimizes strategies against worst-case uncertainty sets without questioning the semantic validity or causal legitimacy of the uncertainty framework itself.
\end{itemize}
While these classical paradigms assume a static, uncorrupted link to objective data, the framework of \textit{Causal Mirage Equilibrium} explicitly models the dynamics of this grounding architecture and isolates the precise threshold where autoregressive trajectories decouple into self-consistent, operational illusions.

In this paper, we bridge statistical mechanics, high-dimensional geometry, and generalized mean-field-type game theory to construct 
a  framework for studying epistemic decoupling in advanced generative systems. Moving beyond interpretations of hallucinations as transient errors, we model these anomalies as stable, self-reinforcing attractors. 

Our core contributions are structured as follows:
\begin{enumerate}[label=(\roman*), leftmargin=*]
    \item We formalize the transition dynamics of $n$ interacting agents using continuous semantic evolution operators $\Phi_i$ acting on a compact convex latent manifold $\mathcal{Z}$. By introducing a continuous grounding functional $g_i(x, \mathbf{z}, \nu^{\mathcal{X}\times\mathcal{Z}})$, we mathematically define the boundary where internal representation trajectories lose causal contact with the external state $x\in\mathcal{X}$.
    
    \item We isolate a dimensionless parameter, the \emph{Mirage Intensity} $M_i(\mathbf{z},x,\nu^{\mathcal{Z}\times\mathcal{X}})$, defined as the ratio of endogenous reinforcement $r_i(\mathbf{z},\nu^{\mathcal{Z}})$ times operational confidence $c_i(\mathbf{z},\nu^{\mathcal{Z}})$ to the grounded alignment $g_i(x,\mathbf{z},\nu^{\mathcal{X}\times\mathcal{Z}})+\varepsilon$. This metric defines a phase transition between a grounded regime ($M_i<1$) and a decoupled mirage regime ($M_i>1$).
    
    \item  Under continuity, compactness, and quasi-convexity assumptions on the underlying spaces and mappings, we prove via the Kakutani-Glicksberg-Fan fixed-point theorem that a Multi-Agent Mean-Field-Type Causal Mirage Equilibrium exists. The equilibrium consists of a stationary joint law $\nu^*$, a joint state profile $\mathbf{z}^*$, and an action profile $\mathbf{a}^*$ satisfying best-response expectile optimization, policy consistency, and a strict mirage constraint $M_i\ge 1+\eta_i>1$.
    
\end{enumerate}

Unlike traditional model collapse, where generative distributions flatten and lose operational capacity, a system in Mean-Field-Type Causal Mirage Equilibrium remains highly functional, articulate, and internally stable. This renders it a far more insidious safety threat, as the underlying systemic failure is hidden behind a mask of high-confidence compliance.

The remainder of the paper is structured as follows. Section~\ref{sec:related} critically examines the structural limitations of classical equilibrium concepts with respect to representation validity and causal legitimacy. Section~\ref{sec:grounding} formally introduces the multi-agent dynamic framework, defining the closed-loop semantic transition operators, the grounding, reinforcement, and confidence functionals, and the dimensionless Mirage Intensity metric. We formalize the CME, provides the complete topological existence proof via the Kakutani-Glicksberg-Fan fixed-point theorem under constrained action correspondences, and details the non-linear mirage bifurcation mechanics. Finally, Section~\ref{sec:conclusion} outlines future research trajectories and applications to safety-critical agentic architectures.

\section{Related Foundations and Limitations of Existing Equilibria}
\label{sec:related}

Let $i \in \mathcal{I}= \{1,\dots,n\}, \ n\geq 1$ denote agents, $a_i \in \mathcal{A}_i$ actions, $u_i: \mathcal{A} \to \mathbb{R}^{d_i}, d_i\geq 1 $ utilities (where $\mathcal{A} = \prod_{i=1}^n  \mathcal{A}_i$ non-empty), $\mu_i \in \Delta(\Theta)$ beliefs over payoff‑relevant states $\Theta$, and $\pi_i: \  \Theta_i \to \Delta(\mathcal{A}_i)$ strategies (possibly mixed). Classical equilibrium concepts stabilize one or several of these quantities. We review the most important notions.

\begin{itemize}

    \item  \textbf{Pareto Equilibrium} \cite{pareto1906}: 

Consider an $n$-player game with vector payoffs:
$
\mathcal{G}
=
\Bigl(
\mathcal{I},
(\mathcal{A}_i)_{i\in\mathcal{I}},
(u_i)_{i\in\mathcal{I}}
\Bigr),
$
where $\mathcal{I}=\{1,\dots,n\}$ is the set of players,
 $\mathcal{A}_i$ is the strategy set of player $i$,
 $\mathcal{A}:=\prod_{i=1}^n \mathcal{A}_i$ is the joint strategy space,
for each player $i\in\mathcal{I}$,
    $
    u_i:\mathcal{A}\to\mathbb{R}^{d_i}
    $
    is a vector-valued payoff function.

For vectors $x,y\in\mathbb{R}^{d_i}$, define the componentwise partial order
\[
x \succeq_i y
\quad\Longleftrightarrow\quad
x_k \ge y_k,
\qquad
\forall k\in\{1,\dots,d_i\},
\]
and the strict Pareto improvement relation
$$
x \succ_i y
\quad\Longleftrightarrow$$ $$
x \succeq_i y
\;\text{and}\;
x_k>y_k
\text{ for at least one component }k.
$$

Let
$
a^\star=(a_1^\star,\dots,a_n^\star)\in\mathcal{A}
$
and denote
$
a_{-i}^\star
=
(a_1^\star,\dots,a_{i-1}^\star,a_{i+1}^\star,\dots,a_n^\star).
$

The strategy profile $a^\star$ is called a \emph{Pareto-Nash equilibrium} if, for every player $i\in\mathcal{I}$, there exists no unilateral deviation
$
a_i\in\mathcal{A}_i
$
such that
\[
u_i(a_i,a_{-i}^\star)
\succ_i
u_i(a_i^\star,a_{-i}^\star).
\]

Equivalently, for every player $i\in\mathcal{I}$ and every unilateral deviation $a_i\in\mathcal{A}_i$,
\[
u_i(a_i,a_{-i}^\star)
\not\succ_i
u_i(a_i^\star,a_{-i}^\star).
\]

Hence, at a Pareto equilibrium, no player can unilaterally modify its strategy in a way that weakly improves all components of its own payoff vector while strictly improving at least one component, given that the strategies of all other players remain fixed.

  \item \textbf{Cournot equilibrium} \cite{cournot1838}: Here $d_i=1, \ \forall i.$ In an oligopoly where firms choose quantities, a Cournot equilibrium is a Nash equilibrium (below) of the quantity‑setting game. Each firm’s quantity maximizes its profit given the quantities of its rivals.

    \item \textbf{Bertrand equilibrium} \cite{bertrand1883}: Here $d_i=1,\ \forall i.$ In an oligopoly where firms choose prices (and products are homogeneous), a Bertrand equilibrium is a Nash equilibrium (below) of the price‑setting game. Under standard assumptions, the unique equilibrium has prices equal to marginal cost.

    \item \textbf{Stackelberg solution} \cite{stackelberg1934}: Here $d_i=1, \ \forall i.$ In a leader-follower game, the leader commits to a strategy; the followers best respond. A Stackelberg equilibrium is a subgame perfect equilibrium of the sequential game where the leader moves first.

    \item \textbf{Inverse Stackelberg solution} \cite{ho1981, tobasco1999}: $d_i=1, \forall i.$ The leader announces a strategy that depends on the follower’s future decisions (e.g., a reaction function). The follower then chooses a decision rule. The equilibrium is a fixed point of this hierarchical structure. It generalizes the Stackelberg concept.

    \item \textbf{Nash equilibrium} \cite{nash}: Here $d_i=1, \forall i.$ A profile $\pi^* = (\pi_1^*,\dots,\pi_n^*)$ is a Nash equilibrium if for every player $i \in \{1,\dots,n\}$ and every deviation $\pi_i'$,
    \[
    u_i(\pi_i^*, \pi_{-i}^*) \ge u_i(\pi_i', \pi_{-i}^*).
    \]
    No player can unilaterally improve their payoff. Nash equilibrium stabilizes actions (or mixed strategies) under mutual best responses.

  \item \textbf{Wardrop equilibrium} \cite{wardrop1952}: Here $d_i=1, \forall i.$  In traffic networks, a Wardrop equilibrium is a flow distribution where all used routes between an origin–destination pair have equal and minimal travel time, and unused routes have higher travel time. It corresponds to a Nash equilibrium in a non‑atomic congestion game.
\item \textbf{Constrained equilibrium} \cite{debreu1952, arrow1954,grossman1983, myerson1991}: Here $d_i=1, \forall i.$  An equilibrium where players face exogenous constraints on their strategy sets (e.g., incomplete markets, regulatory caps, or bounded rationality). A constrained Nash equilibrium (historically termed a Generalized Nash Equilibrium) is a Nash equilibrium of a game where each player’s admissible strategies are restricted to a subset $\mathcal{A}_i^{\text{con}} \subset \mathcal{A}_i$ and/or a coupled constraint $\prod_{i}\mathcal{A}_i^{\text{con}} \subset \mathcal{C}$.

    \item \textbf{Satisfactory solution (Satisficing)} \cite{simon1955}: Here $d_i=1, \forall i.$ In bounded rationality, a satisficing equilibrium (or satisfactory solution) is a profile where each player’s payoff exceeds an aspiration level (a “satisficing” threshold), rather than being globally optimal. It stabilizes “good enough” outcomes without requiring full rationality.

\item \textbf{Berge equilibrium} \cite{berge1957, zaccour2019}: A strategy profile $a^* = (a_1^*, \dots, a_n^*)$ is a Berge equilibrium if for every player $i$ and for all deviations by all other players $a_{-i} \in A_{-i}$,
\[
u_i(a_i^*, a_{-i}^*) \ge u_i(a_i^*, a_{-i})
\]
That is, if all players other than player $i$ choose to deviate from the equilibrium profile, player $i$'s payoff will either decrease or stay the same. It represents a situation where the collective strategies of the remaining players mutually support and protect the payoff of player $i$.

\item \textbf{Strong equilibrium} \cite{aumann1959}: Here $d_i=1, \forall i.$ A refinement of Nash equilibrium that is robust to deviations by \emph{coalitions} of players. A strategy profile $\pi^* = (\pi_1^*,\dots,\pi_n^*)$ is a strong equilibrium if there exists no nonempty coalition $S \subseteq \{1,\dots,n\}$ and a profile of strategies $\pi_S' = (\pi_i')_{i \in S}$ such that for every player $i \in S$,
\[
u_i(\pi_S', \pi_{-S}^*) \ge u_i(\pi^*),
\]
with at least one strict inequality. Equivalently, no coalition can deviate in a way that makes every member at least as well off and at least one strictly better off. Strong equilibrium captures the idea of coalitional stability: the strategy profile is a Nash equilibrium for every subgame where the coalition acts as a single player. It is a very demanding concept and often does not exist in many games, but when it does, it represents a robust social agreement.
  
    \item \textbf{Correlated equilibrium} \cite{aumann1974}: Here $d_i=1, \forall i.$ A distribution $\sigma \in \Delta(\mathcal{A})$ such that for every player $i$ and every deviation $a_i'$,
    \[
    \sum_{a \in \mathcal{A}} \sigma(a) \, u_i(a) \ge \sum_{a \in \mathcal{A}} \sigma(a) \, u_i(a_i', a_{-i}).
    \]
    A device (correlation device) recommends actions, and no player has an incentive to deviate from the recommendation. Correlated equilibria generalize Nash equilibria.

    \item \textbf{Communication equilibrium} \cite{forges1986, myerson1986}: A communication equilibrium extends the correlated equilibrium to games where players can send messages to a mediator before choosing actions. It requires that no player can profitably deviate by misreporting their private information or by disobeying the mediator’s recommendations.

    \item \textbf{Subgame perfect equilibrium} \cite{selten1965}: For extensive‑form games, a refinement requiring that the strategy profile induces a Nash equilibrium in every subgame. It eliminates non‑credible threats.

    \item \textbf{Perfect Bayesian equilibrium} \cite{fudenberg1991}: For games with incomplete information, it combines sequential rationality with consistent beliefs updated via Bayes’ rule whenever possible.

    \item \textbf{Bayesian equilibrium} \cite{harsanyi}: In a game with incomplete information, a Bayesian equilibrium is a profile of type‑dependent strategies such that each player’s strategy maximizes expected utility given their type and the common prior over states, after updating via Bayes’ rule.

    \item \textbf{Rational expectations equilibrium} \cite{muth1961, lucas1972}: Used in economics and finance, it requires that agents’ expectations about future variables are consistent with the actual probability distributions generated by the equilibrium. That is, subjective beliefs equal objective distributions. This is a strong form of epistemic consistency.

    \item \textbf{Self‑confirming equilibrium} \cite{fudenberg1993, fudenberg2010}: Each player’s beliefs are consistent with the observations they make along the equilibrium path, but beliefs about off‑path outcomes may be arbitrary. Players may have misspecified models, but the data they observe do not contradict their beliefs.

    \item \textbf{Robust equilibrium} \cite{ben2009robust}: Strategies are optimal under worst‑case uncertainty sets over payoff parameters or the opponent’s actions. Typically formulated as a maximin problem, it stabilizes actions against adversarial perturbations.

    \item \textbf{Quantal response equilibrium} \cite{mckelvey1995}: A stochastic extension of Nash equilibrium where players make errors with a logit or other probabilistic choice rule. The equilibrium is a fixed point of a smoothed best‑response function, capturing bounded rationality.

    \item \textbf{Evolutionary equilibrium} \cite{maynard1973, weibull1995}: In population games, a strategy is an evolutionary equilibrium if it is resistant to invasion by rare mutants. The core concept is the evolutionarily stable strategy (ESS), often studied via replicator dynamics.

 \item \textbf{Mean‑field (Nash) equilibrium} \cite{jovanovic1982, hopenhayn1992}: In anonymous sequential games with a continuum of identical agents, a mean‑field equilibrium (originally called ``anonymous equilibrium'' or ``stationary equilibrium'') consists of a distribution over states and a policy such that each agent’s policy is a best response to the aggregate distribution, and the distribution is stationary under the induced dynamics. It stabilizes population distributions. The concept was later independently developed in the mean‑field game literature. There are many extensions such as mean-field correlated equilibrium, mean-field Stackelberg equilibrium, mean-field Berge equilibrium etc.
 
 \item \textbf{Oblivious equilibrium} \cite{weintraub2008}: In stochastic games with many weakly interacting agents, an oblivious equilibrium is a profile of strategies where each agent ignores the detailed distribution of other agents' states and only tracks the aggregate ``oblivious'' state (e.g., industry output or price). Each agent's policy is a best response to the steady‑state distribution of the aggregate state, and the aggregate state is consistent with the agents' policies. Oblivious equilibrium approximates mean-field (Nash) equilibrium in specific large populations with high computational efficiency. 

  \item \textbf{Conjectural equilibrium} \cite{bresnahan1981, fudenberg1993conjectural}: Each player holds a conjecture about how other players’ actions respond to changes in their own action. The equilibrium requires that conjectures be consistent with the actual reaction functions (often used in oligopoly theory).

  \item \textbf{Risk-neutral mean‑field‑type Nash equilibrium} \cite{basar2026a,basar2026b}: Consider a finite set of agents $\mathcal{I} = \{1,\dots,n\}$. Let $x \in \mathcal{X}$ be a common external state, $z_j \in \mathcal{Z}_j$ the state of agent $j$, and $a_j \in \mathcal{A}$ its action. Define $\mathbf{Z} = (z_1,\dots,z_n) \in \mathcal{Z}$, $\mathbf{A} = (a_1,\dots,a_n) \in \mathcal{A}$, and $\omega = (x, \mathbf{Z}, \mathbf{A}) \in \Omega := \mathcal{X} \times \mathcal{Z} \times \mathcal{A}, \ T\geq 1.$.

For any policy profile $\pi = (\pi_1,\dots,\pi_n)$, let $\nu^\pi \in \mathcal{P}(\Omega)$ denote the probability distribution of $\omega$ induced by the common state process and the agents' policies $\pi$ (including initial conditions and exogenous randomness). The utility functional for agent $i$ when the profile $\pi$ is played is
\[
U_i(\pi) = \mathbb{E}_{\nu^\pi, x_0, \mathbf{Z}_0}\!\left[ \frac{1}{ \sum_{t=0}^T \beta^t_i}\sum_{t=0}^T \beta^t_i \, u_i\!\left(x_t, \mathbf{Z}_t, \mathbf{A}_t, \nu_t^\pi\right) \right],
\]
where $\nu_t^\pi$ is the time‑$t$ marginal of $\nu^\pi$, $\beta_i> 0$ and $u_i$ depends explicitly on:
\begin{itemize}
    \item the common state $x_t$,
    \item the full vectors $\mathbf{Z}_t$ (all agents' states) and $\mathbf{A}_t$ (all agents' actions),
    \item and the probability measure $\nu_t^\pi$ (which, in equilibrium, is the distribution induced by the profile $\pi$).
\end{itemize}

A policy profile $\pi^* = (\pi_{1}^*,\dots,\pi_{n}^*)$ is a  risk-neutral \textbf{mean‑field‑type Nash equilibrium} if for every agent $i$ and every alternative policy $\pi_i$ (possibly randomized),
\[
U_i(\pi_{i}^*, \pi_{-i}^*) \;\geq\; U_i(\pi_{i}, \pi_{-i}^*),
\]
where $U_i(\pi_{i}, \pi_{-i}^*)$ is evaluated using the distribution $\nu^{(\pi_{i}, \pi_{-i}^*)}$ induced by the deviating profile, not the equilibrium distribution. This is the standard Nash condition: a unilateral deviation changes the joint distribution, and the deviator takes that into account. In particular structure of interest for $u_i$ is  $u_i\!\left(x_t, z_{i,t}, a_{i,t}, \mathbb{P}_{x_{i,t}, z_{i,t}, a_{i,t}}, \frac{1}{n-1}\sum_{j\neq i}\delta_{z_{i,t}, a_{i,t}}) \right)$

Unlike mean‑field games (which require $n\to\infty$ and focuses only on the large population state-action distribution), MFT‑NE is defined for any  $n$ and captures dependencies on the full vectors of states, actions, and their joint probability measure. The distribution $\nu$ is endogenous and is determined entirely by the policy profile; no separate consistency condition is needed.

\end{itemize}

None of the above concepts model the endogenous decoupling of semantic representations from causal reality. They assume that the objects being stabilized (actions, beliefs, distributions) either are directly linked to an external ground truth (e.g., rational expectations) or are not required to be so linked (e.g., self‑confirming equilibrium). In generative MI, representations can become internally coherent and operationally useful while drifting arbitrarily far from truth, a possibility none of these equilibria address. This gap motivates the introduction of  Mean-Field-Type {\it Causal Mirage Equilibrium.}

\subsubsection*{The Missing Object: Causal Legitimacy}

Generative MI mirages exhibit the following empirically observed structure:
\begin{enumerate}
    \item internal semantic coherence,
    \item reinforcement through memory and recursion,
    \item confidence amplification,
    \item downstream operational usefulness (e.g., in agentic tasks),
    \item persistence under weak contradictory evidence,
    \item progressive causal detachment.
\end{enumerate}
To capture this, we need a state variable that tracks the degree of causal alignment between internal representations and external reality, as well as the self‑reinforcing dynamics that can override that alignment.

\section{Causal Grounding and Semantic Reinforcement in Multi-Agent Systems}
\label{sec:grounding}

Let $(\Omega,\mathcal{F},\mathbb{P})$ be a probability space,  $(\mathcal{X},d_{\mathcal{X}})$ be a compact metric space representing external causal states, $(\mathcal{Z},d_{\mathcal{Z}})\subset \mathbb{R}^d$ be a compact convex set representing internal semantic states (common to all agents).  
For each agent $i\in\{1,\dots,n\}$, let $(\mathcal{A}_i,d_{\mathcal{A}_i})$ be a compact metric action space.  
Denote $\mathcal{Z} = \prod_{i=1}^n \mathcal{Z}_i$, $\mathcal{A} = \prod_{i=1}^n \mathcal{A}_i$.

Let $\mathcal{P}(\mathcal{S})$ be the space of Borel probability measures on a Polish space $\mathcal{S}$, equipped with the topology of weak convergence.  
We consider random variables
\[
x:\Omega\to\mathcal{X},\  
\mathbf{z}=(z_1,\dots,z_n):\Omega\to\mathcal{Z},\ \] \[
\mathbf{a}=(a_1,\dots,a_n):\Omega\to\mathcal{A}.
\]

For each agent $i$, the \textbf{generative policy} is a measurable map
$
\pi_i : \mathcal{Z} \times \mathcal{P}(\mathcal{Z}) \to \mathcal{A}_i,
$
assumed continuous in both arguments under the weak topology on $\mathcal{P}(\mathcal{Z})$.  
The \textbf{semantic evolution operator} for agent $i$ is a measurable map
$
\Phi_i : \mathcal{Z}\times \mathcal{A} \times \mathcal{X} \times \mathcal{P}(\mathcal{Z}\times\mathcal{A}\times\mathcal{X}) \to \mathcal{Z}.
$ Let $\nu_t \in \mathcal{P}(\mathcal{Z}\times\mathcal{A}\times\mathcal{X})$ denote the joint law of $(\mathbf{z}_t,\mathbf{a}_t,x)$ (or $(\mathbf{z}_t,\mathbf{a}_t,x_t)$ in the dynamic case).  
The closed-loop dynamics are given by
\[
a_{i,t} = \pi_i(\mathbf{z}_t,\nu_t^{\mathcal{Z}}), \qquad 
z_{i,t+1} = \Phi_i(\mathbf{z}_t,\mathbf{a}_t,x,\nu_t),
\]
where $\nu_t^{\mathcal{Z}}$ denotes the $\mathcal{Z}$-marginal of $\nu_t$.  
All maps are assumed Borel measurable and continuous where needed to ensure well-defined pushforward dynamics.

\subsection{Grounding, Reinforcement, and Confidence Functionals}

\begin{definition}[Grounding Functional for Agent $i$]
A \textbf{grounding functional} for agent $i$ is a measurable map
\[
g_i : \mathcal{X} \times \mathcal{X} \times \mathcal{P}(\mathcal{X}\times\mathcal{Z}) \to [0,1]
\]
such that for every fixed $(x,\nu)\in \mathcal{X}\times\mathcal{P}(\mathcal{X}\times\mathcal{Z})$, the function $\mathbf{z} \mapsto g_i(x,\mathbf{z} ,\nu)$ is concave and upper semicontinuous on $\mathcal{Z}$ (with respect to the product topology).  
The value $g_i(x,\mathbf{z},\nu^{\mathcal{X}\times\mathcal{Z}})$ quantifies the causal alignment between the joint semantic state $\mathbf{z}$ and the external state $x$ under the joint law $\nu^{\mathcal{X}\times\mathcal{Z}}$.  
We interpret $g_i = 1$ as perfect grounding and $g_i = 0$ as maximal causal detachment.
\end{definition}

\begin{definition}[Reinforcement Functional for Agent $i$]
A \textbf{reinforcement functional} for agent $i$ is a continuous map
\[
r_i : \mathcal{Z} \times \mathcal{P}(\mathcal{Z}) \to [0,\infty)
\]
quantifying the endogenous reinforcement strength of the joint semantic state $\mathbf{z}$ relative to its population context $\mu^{\mathcal{Z}}$.
\end{definition}

\begin{definition}[Confidence Field for Agent $i$]
A \textbf{confidence field} for agent $i$ is a continuous map
\[
c_i : \mathcal{Z} \times \mathcal{P}(\mathcal{Z}) \to (0,\infty),
\]
interpreted as the operational confidence assigned to $\mathbf{z}$ given its distributional context.
\end{definition}

\subsection{Mirage Intensity}

Fix $\varepsilon > 0$ (regularity constant).  

\begin{definition}[Mirage Intensity for Agent $i$]
For a joint law $\nu \in \mathcal{P}(\mathcal{Z}\times\mathcal{A}\times\mathcal{X})$ with $\nu^{\mathcal{X}} = \mu^X$ (given external distribution), define the \textbf{mirage intensity} for agent $i$ at semantic state $\mathbf{z}\in\mathcal{Z}$, external state $x\in\mathcal{X}$, and marginal $\nu^{\mathcal{Z}\times\mathcal{X}}$ as
\[
M_i\bigl(\mathbf{z}, x, \nu^{\mathcal{Z}\times\mathcal{X}}\bigr)
\;:=\;
\frac{r_i(\mathbf{z}, \nu^{\mathcal{Z}}) \; c_i(\mathbf{z}, \nu^{\mathcal{Z}})}
{g_i(x, \mathbf{z}, \nu^{\mathcal{X}\times\mathcal{Z}}) + \varepsilon}.
\]
\end{definition}

Interpretation:
\begin{itemize}
    \item $M_i(\mathbf{z},x,\nu) < 1$: grounding-dominated regime,
    \item $M_i(\mathbf{z},x,\nu) = 1$: critical transition surface,
    \item $M_i(\mathbf{z},x,\nu) > 1$: reinforcement-dominated (mirage) regime.
\end{itemize}

\subsection{Multi-Agent Mean-Field-Type Causal Mirage Equilibrium}

We now define the equilibrium concept that synthesizes the above functionals, expectile-based optimization, and mirage-feasibility constraints.

\begin{definition}[Risk-Sensitive Multi-Agent Mean-Field-Type Causal Mirage Equilibrium]\label{def:macme}
Let $\mu^X \in \mathcal{P}(\mathcal{X})$ be a given distribution over external states.  
For each agent $i$, fix $\tau_i \in (0,1)$ (expectile parameter) and $\eta_i > 0$ (mirage slack).  
Let  
\[
u_i :  \ \mathcal{X} \times \mathcal{Z} \times \mathcal{A}_i \times \mathcal{A}_{-i} \times
 \mathcal{P}(\mathcal{X}\times\mathcal{Z}\times\mathcal{A}) \rightarrow \mathbb{R}
\]
be a measurable utility function.  
The $\tau_i$-expectile risk $e_{\tau_i}(U)$ of a random variable $U$ is the unique $e\in\mathbb{R}$ satisfying  
\[
\tau_i \,\mathbb{E}\big[(U-e)_+\big] = (1-\tau_i)\,\mathbb{E}\big[(U-e)_-\big],
\]
where $(\cdot)_+ = \max(\cdot,0)$ and $(\cdot)_- = \max(-\cdot,0)$.

For a joint law $\nu \in \mathcal{P}(\mathcal{Z}\times\mathcal{A}\times\mathcal{X})$ with $\nu^{\mathcal{X}} = \mu^X$, define the \textbf{mirage intensity functional} for agent $i$, i.e.,
\[
M_i\bigl(\mathbf{z}, x, \nu^{\mathcal{Z}\times\mathcal{X}}\bigr) := \frac{r_i(\mathbf{z}, \nu^{\mathcal{Z}}) \, c_i(\mathbf{z}, \nu^{\mathcal{Z}})}{g_i(x, \mathbf{z}, \nu^{\mathcal{X}\times\mathcal{Z}}) + \varepsilon}.
\]

\paragraph{Mirage‑feasible action correspondence.}
For a given joint law $\nu$, a state vector $\mathbf{z}\in\mathcal{Z}$, an action profile $\mathbf{a}_{-i}$ of all agents except $i$, an external state $x\in\mathcal{X}$, and a candidate action $a_i\in\mathcal{A}_i$, define the post‑transition semantic state of agent $i$ as
\[
z_i' = \Phi_i\bigl(\mathbf{z}, (a_i,\mathbf{a}_{-i}), x, \nu\bigr).
\]
The action $a_i$ is called \textit{mirage‑feasible} if
$
M_i\!\left( \mathbf{z}^{(i)}, x, \nu^{\mathcal{Z}\times\mathcal{X}} \right) \ge 1 + \eta_i,
$
where $\mathbf{z}^{(i)}$ is the vector $\mathbf{z}$ with its $i$-th component replaced by $z_i'$ (and the other components unchanged).  
Define the correspondence
\[
F_i(\mathbf{z}, \mathbf{a}_{-i}, x, \nu) := \left\{ a_i \in \mathcal{A}_i \;:\; M_i\!\left( \mathbf{z}^{(i)}, x, \nu^{\mathcal{Z}\times\mathcal{X}} \right) \ge 1 + \eta_i \right\}.
\]

\paragraph{Counterfactual distribution under a deviation.}
For a fixed joint law $\nu$ with $\nu^{\mathcal{X}}=\mu^X$, a state $\mathbf{z}\in\mathcal{Z}$, and an action profile $(a_i',\mathbf{a}_{-i})\in\mathcal{A}$, define the \textit{interventional distribution}
\[
\nu^{X,\mathbf{Z},a_i',\mathbf{a}_{-i}} := \nu^{\mathcal{X}\times\mathcal{Z}} \otimes \delta_{(a_i',\mathbf{a}_{-i})},
\]
i.e., the product of the marginal of $\nu$ on $\mathcal{X}\times\mathcal{Z}$ with a Dirac mass on the fixed action profile. This represents the joint law when $X$ and $\mathbf{Z}$ are drawn from their stationary joint distribution (as given by $\nu$) and actions are forced to $(a_i',\mathbf{a}_{-i})$.

\paragraph{Mean-Field-Type Equilibrium definition.}
A tuple $(\mathbf{z}^*,\mathbf{a}^*,\nu^*)$ with $\mathbf{z}^*=(z_1^*,\dots,z_n^*)\in\mathcal{Z}$, $\mathbf{a}^*=(a_1^*,\dots,a_n^*)\in\mathcal{A}$, and $\nu^*\in\mathcal{P}(\mathcal{Z}\times\mathcal{A}\times\mathcal{X})$ is called a \textbf{Multi‑Agent Mean-Field-Type Causal Mirage Equilibrium} (CME) associated with the external distribution $\mu^X$ if the following conditions hold:

\begin{enumerate}[label=(\roman*)]
\item \textbf{Best response with mirage feasibility.}  
For $\nu^*$-almost every $(\mathbf{z},\mathbf{a},x)$ in the support of $\nu^*$, for every agent $i$,
\[
a_i \;\in\; \arg\max_{a_i' \in F_i(\mathbf{z}, \mathbf{a}_{-i}, x, \nu^*)}
\; e_{\tau_i}\!\Big( u_i\big(x,\mathbf{z},a_i',\mathbf{a}_{-i},\nu^{*\,X,\mathbf{Z},a_i',\mathbf{a}_{-i}}\big) \Big),
\]
where the expectation defining $e_{\tau_i}$ is taken under the probability measure $\nu^{*\,X,\mathbf{Z},a_i',\mathbf{a}_{-i}}$.

\item \textbf{Stationarity of the joint law.}  
$\nu^*$ is invariant under the closed‑loop dynamics: if $(\mathbf{Z},\mathbf{A},X) \sim \nu^*$, then
\[
\bigl( \Phi(\mathbf{Z},\mathbf{A},X,\nu^*),\; \mathbf{A}',\; X \bigr) \;\sim\; \nu^*,
\]
where $\Phi = (\Phi_1,\dots,\Phi_n)$ and $\mathbf{A}'$ is any $\mathcal{A}$-valued random variable such that for each $i$, $A_i'$ is a measurable selection from the argmax in (i) given $(\mathbf{Z},\mathbf{A}_{-i},X)$. In particular, $\nu^*$ is a stationary distribution of the mean‑field dynamics induced by the best‑response correspondence.

\item \textbf{Semantic fixed point.}  
For $\nu^*$-almost every $(\mathbf{z},\mathbf{a},x)$,
\[
z_i = \Phi_i(\mathbf{z},\mathbf{a},x,\nu^*) \qquad \forall i\in\{1,\dots,n\}.
\]

\item \textbf{Mirage regime at mean-field-type equilibrium.}  
Because of (i) and (iii), for $\nu^*$-almost every $(\mathbf{z}^*,\mathbf{a}^*,x)$ we have
\[
M_i\!\left( \mathbf{z}^*, x, (\nu^*)^{\mathcal{Z}\times\mathcal{X}} \right) \;\ge\; 1 + \eta_i \;>\; 1 \quad \forall i,
\]
i.e., every agent lies strictly inside the mirage‑dominant phase.

\item \textbf{Local stability.}  
Define the closed‑loop dynamics under $\nu^*$ for a given external state $x$:
\[
T_{\nu^*,i}(\mathbf{z}) \;:=\; \Phi_i\!\left( \mathbf{z},\; \bigl( \pi_1(\mathbf{z},\nu^*),\dots,\pi_n(\mathbf{z},\nu^*) \bigr),\; x,\; \nu^* \right),
\]
where $\pi_i(\mathbf{z},\nu^*)$ is any measurable selection from the argmax in (i) evaluated at $(\mathbf{z},\mathbf{a}_{-i},x,\nu^*)$ with $\mathbf{a}_{-i}$ taken from the mean-field-type equilibrium profile (or from the support of $\nu^*$).  
The mean-field-type equilibrium state $\mathbf{z}^*$ is \textit{locally stable} if there exists $\delta>0$ such that for $\mu^X$-almost every $x$, for any initial condition $\tilde{\mathbf{z}}\in\mathcal{Z}$ with $d_{\mathcal{Z}}(\tilde{\mathbf{z}},\mathbf{z}^*)<\delta$, the trajectory
\[
\mathbf{z}_{t+1} = T_{\nu^*}(\mathbf{z}_t) \;:=\; \bigl( T_{\nu^*,1}(\mathbf{z}_t),\dots,T_{\nu^*,n}(\mathbf{z}_t) \bigr),\qquad \mathbf{z}_0 = \tilde{\mathbf{z}},
\]
satisfies
\[
\limsup_{t\to\infty} d_{\mathcal{Z}}(\mathbf{z}_t,\mathbf{z}^*) \;\le\; \varepsilon(\delta),
\]
with $\varepsilon(\delta)\to 0$ as $\delta\to 0$.
\end{enumerate}
\end{definition}

\begin{remark}
Compared to a standard Nash equilibrium, where equilibrium conditions enforce that each player’s action is a best response to others’ actions, the Mean-Field-Type Causal Mirage Equilibrium imposes a fundamentally different type of consistency. It is a deeper refinement.  Condition (i) enforces policy coherence by requiring actions to be fully induced by the internal representation, rather than independently optimized. Condition (ii) replaces strategic optimality with dynamical invariance, requiring the semantic state to be a fixed point of the closed-loop evolution induced jointly by the policy and the environment. Condition (iii) introduces a regime constraint absent in classical game theory: equilibrium is only admissible when the system operates in a high-reinforcement, low-grounding phase where internal reinforcement and confidence dominate causal alignment. Condition (iv) restricts equilibria to the invariant domain of the induced dynamics, ensuring that the representation-action pair is self-contained under iteration. Condition (v) strengthens the notion from a static fixed point to a dynamically stable attractor, requiring robustness under perturbations. Unlike standard Nash equilibrium which characterizes optimal strategic behavior in action space per player, CME characterizes the persistence of self-consistent but potentially causally detached semantic states under recursive generative dynamics (Figure \ref{fig:mirage_bifurcation_topology}).
\end{remark}

\begin{figure*}[htbp]
\centering
\begin{tikzpicture}[
    >=Stealth,
    panelbox/.style={draw=gray!30, fill=gray!3, rounded corners=6pt, line width=0.5pt, minimum height=7.5cm, minimum width=4.8cm, anchor=north west},
    titlefont/.style={font=\small\bfseries\sffamily, align=center},
    subfont/.style={font=\footnotesize\sffamily, align=center},
    labelfont/.style={font=\scriptsize\sffamily, color=gray!80!black}
]

    \node[panelbox] (panelA) at (0,0) {};
    \node[titlefont, below=0.2cm of panelA.north] (titleA) {\textbf{Panel A} ($M < 1$)\\Grounded Regime};
    \node[subfont, below=0.1cm of titleA] {Stable Reality Attractor\\System remains anchored.};

    \begin{scope}[shift={($(panelA.center)+(0,-1.2)$)}, scale=0.85]
        \draw[top color=blue!20, bottom color=blue!50!black, opacity=0.4, thick] (0,1.5) ellipse (1.8cm and 0.6cm);
        \fill[bottom color=blue!10, top color=blue!40, opacity=0.3] (-1.8,1.5) .. controls (-1.8,0) and (1.8,0) .. (1.8,1.5) -- (1.8,1.5) .. controls (1.8,0.9) and (-1.8,0.9) .. (-1.8,1.5);
        \draw[thick, color=blue!70!black] (-1.8,1.5) .. controls (-1.8,0) and (1.8,0) .. (1.8,1.5);
        
        \node[circle, fill=green!60!black, inner sep=1.5pt, label={below:\scriptsize $z^* = \mathbf{\Psi}(x)$}] (targetA) at (0,0.6) {};
        
        \foreach \angle in {0,30,60,90,120,150,180,210,240,270,300,330} {
            \draw[->, blue!80!black, line width=0.6pt] ({1.4*cos(\angle)}, {0.6 + 0.4*sin(\angle)}) -- ({0.3*cos(\angle)}, {0.6 + 0.1*sin(\angle)});
        }
        \draw[->, purple, thick, line width=1.2pt] (-1.2, 1.3) .. controls (-0.6, 0.9) .. (targetA);
        \node[labelfont, anchor=west] at (-1.5, 1.6) {Trajectory $z_t$};
    \end{scope}
    \node[labelfont, yshift=0.5cm] at (panelA.south) {$g(z,x)$ dominates};

    \node[panelbox, right=0.3cm of panelA] (panelB) {};
    \node[titlefont, below=0.2cm of panelB.north] (titleB) {\textbf{Panel B} ($M = 1$)\\Critical Bifurcation};
    \node[subfont, below=0.1cm of titleB] {Spectral Radius $\rho(\mathbf{\Omega}) = 1$\\Landscape warps to flat state.};

    \begin{scope}[shift={($(panelB.center)+(0,-1.2)$)}, scale=0.85]
        \draw[thick, color=orange!80!black, fill=orange!10, opacity=0.5] 
            (-1.8,1.2) to[out=-20,in=160] (1.8,1.2) -- 
            (1.5,-0.2) to[out=160,in=-20] (-2.1,-0.2) -- cycle;
            
        \node[circle, fill=orange!80!black, inner sep=1.5pt, label={below:\scriptsize $z = \mathbf{\Psi}(x)$}] (targetB) at (-0.1,0.5) {};
        
        \draw[->, gray, line width=0.6pt] (-1.2, 0.9) -- (-0.6, 0.7);
        \draw[->, gray, line width=0.6pt] (1.1, 0.9) -- (0.5, 0.7);
        \draw[->, gray, line width=0.6pt] (-1.3, 0.1) -- (-0.7, 0.3);
        \draw[->, gray, line width=0.6pt] (1.0, 0.1) -- (0.4, 0.3);
        
        \draw[->, red!80!black, thick] (targetB) -- (-0.1, 1.4);
        \draw[->, red!80!black, thick] (targetB) -- (-0.1, -0.4);
        \node[labelfont, color=red!80!black] at (0.6, 1.3) {Unstable drift};
    \end{scope}
    \node[labelfont, yshift=0.5cm] at (panelB.south) {Forces in perfect balance};

    \node[panelbox, right=0.3cm of panelB] (panelC) {};
    \node[titlefont, below=0.2cm of panelC.north] (titleC) {\textbf{Panel C} ($M > 1$)\\Mirage Regime};
    \node[subfont, below=0.1cm of titleC] {Stable Hallucination Attractor\\Decoupled manifold trap.};

    \begin{scope}[shift={($(panelC.center)+(0,-1.2)$)}, scale=0.85]
        \draw[top color=red!20, bottom color=purple!60!black, opacity=0.4, thick] (0,1.5) ellipse (1.8cm and 0.6cm);
        
        \shadedraw[bottom color=purple!20, top color=red!40, opacity=0.5, thick] 
            (-1.8,1.5) .. controls (-1.2,-0.2) and (-0.6,1.1) .. (0,1.1)
                       .. controls (0.6,1.1) and (1.2,-0.2) .. (1.8,1.5);
                       
        \draw[line width=1.2pt, red!70!black, dashed] (0,0.4) ellipse (1.0cm and 0.3cm);
        
        \node[circle, fill=red, inner sep=1.5pt, label={above:\scriptsize Repeller}] (targetC) at (0,1.1) {};
        
        \draw[->, red, line width=0.7pt] (0,0.9) -- (-0.4, 0.7);
        \draw[->, red, line width=0.7pt] (0,0.9) -- (0.4, 0.7);
        
        \node[circle, fill=purple, inner sep=1.5pt] (cme1) at (-1.0, 0.4) {};
        \node[circle, fill=purple, inner sep=1.5pt] (cme2) at (1.0, 0.4) {};
        \node[labelfont, color=purple, anchor=north] at (0, 0.1) {CME Attractor Manifold $\mathcal{Z}_M$};
        
        \draw[->, blue!80!black, line width=0.6pt] (-1.6, 1.2) -- (-1.2, 0.6);
        \draw[->, blue!80!black, line width=0.6pt] (1.6, 1.2) -- (1.2, 0.6);
    \end{scope}
    \node[labelfont, yshift=0.5cm] at (panelC.south) {$r(z)c(z)$ dominates; $\det(\mathbf{G}^*) = 0$};

    \begin{scope}[on background layer]
        \draw[->, line width=1pt, gray!70] ($(panelA.south west)+(0,-0.4)$) -- ($(panelC.south east)+(0,-0.4)$) 
            node[pos=0.5, below=0.05cm, font=\small\sffamily\bfseries] {Increasing Self-Reinforcement Velocity ($r$) $\longrightarrow$};
    \end{scope}

\end{tikzpicture}
\caption{Geometric mechanics of the Mirage Trap bifurcation in the latent semantic space $\mathcal{Z}$. As the dimensionless Mirage Intensity crosses the critical parameter boundary ($M=1$), the perfectly grounded reality representation $\mathbf{\Psi}(x)$ transitions from a unique globally stable attractor into an unstable repeller. The system trajectories smoothly escape into the invariant ungrounded topological valley representing the stable Mean-Field-Type Causal Mirage Equilibrium manifold.}
\label{fig:mirage_bifurcation_topology}
\end{figure*}

We establish sufficient conditions for the existence of a Multi-Agent  Mean-Field-Type Causal Mirage Equilibrium as defined in Definition~\ref{def:macme}. The proof relies on a fixed‑point argument in the space of probability measures, using the Kakutani-Glicksberg-Fan theorem.

Let $(\mathcal{X},d_{\mathcal{X}})$, $(\mathcal{Z},d_{\mathcal{Z}})\subset\mathbb{R}^d$, and $(\mathcal{A}_i,d_{\mathcal{A}_i})$ be compact metric spaces. Endow each $\mathcal{P}(\cdot)$ with the weak topology, making them compact metrizable spaces (Prokhorov's theorem). Denote
\[
\mathcal{M} := \mathcal{P}(\mathcal{Z} \times \mathcal{A} \times \mathcal{X}),\ 
\mathcal{M}^{\mathcal{Z}} := \mathcal{P}(\mathcal{Z}),\ 
\mathcal{M}^{\mathcal{Z}\times\mathcal{X}} := \mathcal{P}(\mathcal{Z}\times\mathcal{X}).
\]

We make the following regularity hypotheses:

\begin{enumerate}[label=(H\arabic*)]
\item
    \begin{itemize}
        \item $\pi_i : \mathcal{Z} \times \mathcal{M}^{\mathcal{Z}} \to \mathcal{A}_i$ is continuous.
        \item $\Phi_i : \mathcal{Z} \times \mathcal{A} \times \mathcal{X} \times \mathcal{M} \to \mathcal{Z}$ is continuous.
        \item $u_i : \mathcal{X} \times \mathcal{Z} \times \mathcal{A}_i \times \mathcal{A}_{-i} \times \mathcal{M} \to \mathbb{R}$ is continuous.
        \item $r_i, c_i : \mathcal{Z} \times \mathcal{M}^{\mathcal{Z}} \to [0,\infty)$ are continuous.
        \item $g_i : \mathcal{X} \times \mathcal{X} \times \mathcal{M}^{\mathcal{Z}\times\mathcal{X}} \to [0,1]$ is continuous.
    \end{itemize}
\item Each $\mathcal{A}_i$ is a non-empty convex compact subset of a Euclidean space (or we work with mixed strategies $\Delta(\mathcal{A}_i)$, but here we keep pure actions and assume $\mathcal{A}_i$ convex).
\item  For every fixed $x,\mathbf{z},\mathbf{a}_{-i},\nu$, the map $a_i \mapsto u_i(x,\mathbf{z},a_i,\mathbf{a}_{-i},\nu)$ is concave.
\item The $\tau_i$-expectile $e_{\tau_i}(U)$ is continuous in the law of $U$ (under weak convergence) and strictly quasi‑concave in the action when the action influences the distribution linearly. This holds, e.g., if $u_i$ is linear in the measure $\nu^{X,\mathbf{Z},a_i',\mathbf{a}_{-i}}$, which is itself linear in $a_i$ when $\nu$ is fixed.
\item For each $i$, the correspondence
    \[
    F_i(\mathbf{z},\mathbf{a}_{-i},x,\nu) := \{ a_i \in \mathcal{A}_i : M_i(\mathbf{z}^{(i)},x,\nu^{\mathcal{Z}\times\mathcal{X}}) \ge 1+\eta_i \}
    \]
    is nonempty, convex, and continuous (upper and lower hemicontinuous) in $(\mathbf{z},\mathbf{a}_{-i},x,\nu)$.
\item \textbf{Measure compactness:} $\mathcal{M}$ is compact in the weak topology.
\end{enumerate}

\begin{theorem}[Existence of CME]\label{thm:existence}
Under assumptions (H1)-(H6), there exists a tuple $(\mathbf{z}^*,\mathbf{a}^*,\nu^*)$ that satisfies the conditions of Definition~\ref{def:macme} :  Mean-Field-Type Causal Mirage Equilibrium exists.
\end{theorem}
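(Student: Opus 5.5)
We will build a set-valued map on a compact convex product space and apply the Kakutani--Glicksberg--Fan theorem to it.

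\textbf{Domain.} Work on
\[
K := \mathcal{Z}\times\mathcal{A}\times\mathcal{M},
\]
which is a nonempty, convex, compact subset of a locally convex Hausdorff space (the product of Euclidean spaces with the space of finite signed measures under the weak topology).
\begin{itemize}
\item $\mathcal{Z}$ is compact convex by standing assumption.
\item $\mathcal{A}$ is compact convex by (H2).
\item $\mathcal{M}$ is convex, and compact by (H6) and Prokhorov's theorem.
\end{itemize}

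\textbf{The correspondence.} On $K$ define $\Gamma(\mathbf{z},\mathbf{a},\nu)=\Gamma_Z\times\Gamma_A\times\Gamma_M$ with three components.
\begin{itemize}
\item \emph{Action component.} $\Gamma_A=\prod_i B_i$, where
\[
B_i(\mathbf{z},\mathbf{a}_{-i},\nu):=\arg\max_{a_i'\in F_i(\mathbf{z},\mathbf{a}_{-i},x,\nu)} \int e_{\tau_i}\bigl(u_i(\cdot)\bigr)\,d\mu^X(x).
\]
To make this independent of $x$, I would either average over $x$ as above or fix a representative $x$ in the support of $\mu^X$.
\item \emph{State component.} $\Gamma_Z(\mathbf{z},\mathbf{a},\nu)=\{\Phi(\mathbf{z},\mathbf{a},x,\nu)\}$.
\item \emph{Measure component.} $\Gamma_M(\mathbf{z},\mathbf{a},\nu)$ is the closed convex hull of the laws
\[
\delta_{\Phi(\mathbf{z},\mathbf{a},\cdot,\nu)}\otimes\delta_{\mathbf{a}'}\otimes\mu^X, \qquad \mathbf{a}'\in\Gamma_A.
\]
A simpler alternative is the single law $\delta_{\mathbf{z}}\otimes\delta_{\mathbf{a}}\otimes\mu^X$ pushed forward by the closed-loop map.
\end{itemize}

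\textbf{Verifying the Kakutani hypotheses.}
\begin{enumerate}
\item \emph{Best responses.} By (H1) and (H4), the objective $a_i'\mapsto e_{\tau_i}(u_i)$ is jointly continuous. By (H5), the constraint correspondence $F_i$ is continuous with nonempty compact values. Berge's maximum theorem then makes $B_i$ upper hemicontinuous with nonempty compact values. Convexity of the values of $B_i$ follows from (H3)--(H4): the maximizer set of a quasi-concave function over the convex set $F_i$ is convex.
\item \emph{State map.} $\Gamma_Z$ is a continuous single-valued map, since $\Phi$ is continuous by (H1).
\item \emph{Measure map.} $\Gamma_M$ inherits closed graph and convex values from $\Gamma_A$ together with continuity of pushforwards under weak convergence.
\end{enumerate}
Kakutani--Glicksberg--Fan then gives a fixed point $(\mathbf{z}^*,\mathbf{a}^*,\nu^*)\in\Gamma(\mathbf{z}^*,\mathbf{a}^*,\nu^*)$.

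\textbf{Reading off the conditions of Definition~\ref{def:macme}.}
\begin{itemize}
\item Condition (i) follows from $\mathbf{a}^*\in\Gamma_A$, which places $\mathbf{a}^*$ in each $B_i$.
\item Condition (iii) follows from $\mathbf{z}^*=\Phi(\mathbf{z}^*,\mathbf{a}^*,x,\nu^*)$.
\item Condition (ii) follows from $\nu^*\in\Gamma_M$. If $\nu^*$ is concentrated on $\{\mathbf{z}^*\}\times\{\mathbf{a}^*\}\times\mathcal{X}$, then pushing forward by $\Phi$ reproduces $\nu^*$ because of (iii).
\item Condition (iv) holds automatically: $\mathbf{a}^*\in F_i$ and $z_i'=z_i^*$ give $\mathbf{z}^{(i)}=\mathbf{z}^*$, hence $M_i(\mathbf{z}^*,x,\nu^*)\ge 1+\eta_i$.
\end{itemize}

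\textbf{Main obstacles.}
\begin{itemize}
\item \emph{Dependence on $x$.} Conditions (i)--(iii) are required $\nu^*$-a.s., so they must hold for all $x$ in the support of $\mu^X$. A single $\mathbf{z}^*$ is a fixed point of $\Phi(\cdot,\cdot,x,\nu^*)$ only for one $x$ at a time. I would first try to handle this by letting $\mathbf{z}$ and $\mathbf{a}$ be measurable functions of $x$. The domain then becomes the convex set of measures on $\mathcal{Z}\times\mathcal{A}\times\mathcal{X}$ with $\mathcal{X}$-marginal $\mu^X$, a closed convex subset of $\mathcal{M}$. The fixed point is then taken directly in $\nu$, with $\Gamma_M(\nu)$ the set of laws of $(\Phi(\mathbf{Z},\mathbf{A}',X,\nu),\mathbf{A}',X)$ for measurable selections $\mathbf{A}'$ from the best responses. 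Proving that this $\Gamma_M$ has closed graph (via Kuratowski--Ryll-Nardzewski selections and weak limits) is the delicate step. A Kakutani fixed point of this map, however, only delivers condition (ii), stationarity. Stationarity does not force the pointwise identity $z_i=\Phi_i(\mathbf{z},\mathbf{a},x,\nu)$ of (iii), so a further argument is needed, for example a Dirac ansatz or a separate fixed point for $\Phi(\cdot,x)$ in each $x$.
\item \emph{Local stability.} Condition (v) does not follow from (H1)--(H6). I would either require an additional contraction hypothesis near $\mathbf{z}^*$, or treat (v) as a refinement selecting among the existing equilibria and only claim existence for (i)--(iv) under the stated assumptions.

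\end{itemize}
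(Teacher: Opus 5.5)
Your proposal does not yet prove the theorem, and it breaks exactly at the obstacle you flag at the end.

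Conditions (i)--(iii) of Definition~\ref{def:macme} must hold $\nu^*$-almost everywhere, i.e. at $\mu^X$-almost every $x$. Your $\Gamma_Z$ and $\Gamma_A$ only make sense for a frozen $x$. In the averaged version, $F_i(\mathbf{z},\mathbf{a}_{-i},x,\nu)$ sits inside an integral over $x$, so the feasible set is undefined. Moreover, a maximizer of an $x$-average is not the pointwise maximizer that (i) demands. With a representative $x_0$ you get $\mathbf{z}^*=\Phi(\mathbf{z}^*,\mathbf{a}^*,x_0,\nu^*)$ and $\mathbf{a}^*\in B(\cdot,x_0,\nu^*)$ at $x_0$ only. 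Meanwhile the $\mathcal{Z}$-marginal of $\nu^*\in\Gamma_M$ is the law of $\Phi(\mathbf{z}^*,\mathbf{a}^*,X,\nu^*)$ with $X\sim\mu^X$ (or a mixture of such laws), not $\delta_{\mathbf{z}^*}$. So the hypothesis ``$\nu^*$ is concentrated on $\{\mathbf{z}^*\}\times\{\mathbf{a}^*\}\times\mathcal{X}$'' that you use to read off (ii) is never enforced.

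Your fallback is precisely the paper's proof. It takes a Kakutani--Glicksberg--Fan fixed point of the map sending $\nu$ to the pushforwards of $\nu$ under $(\mathbf{z},\mathbf{a},x)\mapsto(\Phi(\mathbf{z},\mathbf{a}',x,\nu),\mathbf{a}',x)$, with $\mathbf{a}'$ a measurable best-response selection. The paper then gets (iii) by asserting that invariance of $\nu^*$ forces $\mathbf{z}=\Phi(\mathbf{z},\mathbf{a},x,\nu^*)$ almost surely. Your objection is right: an invariant law need not sit on fixed points. Take $\Phi(z)=1-z$ on $[0,1]$ and $\mathcal{Z}$-marginal $\tfrac{1}{2}(\delta_0+\delta_1)$. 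The same defect hits (i) in the paper, since under $\nu^*$ the action coordinate is a best response to the pre-transition state rather than to the state it is paired with. Convexity of the set of pushforwards under deterministic selections is also only asserted. So neither your argument nor the paper's closes this step.

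The missing idea combines your two attempts: solve the fixed-point problem pointwise in $x$ with $\nu$ frozen, then lift to measures through a \emph{support} constraint rather than a pushforward.

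\begin{itemize}
\item \emph{Pointwise step.} For fixed $(x,\nu)$, apply Kakutani on $\mathcal{Z}\times\mathcal{A}$ to $(\mathbf{z},\mathbf{a})\mapsto\{\Phi(\mathbf{z},\mathbf{a},x,\nu)\}\times\prod_i B_i(\mathbf{z},\mathbf{a}_{-i},x,\nu)$, which is your $\Gamma_Z\times\Gamma_A$. Let $S_\nu$ be the set of triples $(\mathbf{z},\mathbf{a},x)$ with $\mathbf{z}=\Phi(\mathbf{z},\mathbf{a},x,\nu)$ and $a_i\in B_i(\mathbf{z},\mathbf{a}_{-i},x,\nu)$ for all $i$. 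Kakutani shows $S_\nu$ has nonempty $x$-sections $S_\nu(x)$. By (H1), (H4), (H5) and Berge's theorem, the graph of $\nu\mapsto S_\nu$ is closed.
\item \emph{Lifting step.} On the compact convex set $\{\nu\in\mathcal{M}:\nu^{\mathcal{X}}=\mu^X\}$ set $\Gamma(\nu):=\{\nu':(\nu')^{\mathcal{X}}=\mu^X,\ \nu'(S_\nu)=1\}$. This is convex, since both constraints are affine in $\nu'$. It is nonempty: take the law of $(s(X),X)$, $X\sim\mu^X$, for a Kuratowski--Ryll-Nardzewski selection $s$ of $x\mapsto S_\nu(x)$. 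It has closed graph, by upper hemicontinuity of $\nu\mapsto S_\nu$ plus the Portmanteau inequality on closed $\epsilon$-neighbourhoods of $S_\nu$.
\item \emph{Reading off.} A fixed point $\nu^*\in\Gamma(\nu^*)$ is carried by $S_{\nu^*}$. This gives (i) and (iii) almost surely, and (iv) by your own observation $\mathbf{z}^{(i)}=\mathbf{z}^*$. It gives (ii) because on $S_{\nu^*}$ the closed-loop map is the identity once $B_i$ is single-valued, which is what the strict quasi-concavity in (H4) provides.
\end{itemize}

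On (v), you and the paper agree that an extra contraction hypothesis is needed, and it genuinely is. Take $\mathcal{X}$ and the $\mathcal{A}_i$ singletons, $r_ic_i$ a large constant, and $\Phi(z)=\min\{1,\max\{0,\tfrac{3}{2}-2z\}\}$ on $\mathcal{Z}=[0,1]$. Then (H1)--(H6) hold, but the only fixed point $\tfrac{1}{2}$ is repelling. So only (i)--(iv) can be claimed, as you proposed.
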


\begin{proof}

The proof proceeds by constructing a fixed point of a suitable correspondence defined on the compact convex set $\mathcal{M}$ (which is a convex subset of the vector space of finite signed measures). The key idea: associate to each candidate joint law $\nu$ the set of all joint laws that are stationary under the best‑reply dynamics constrained by mirage feasibility, and then enforce the semantic fixed‑point condition.

For a fixed $\nu \in \mathcal{M}$ with $\nu^{\mathcal{X}}=\mu^X$, define for each agent $i$ the set of admissible actions given state $(\mathbf{z},\mathbf{a}_{-i},x)$:
$$
B_i(\mathbf{z},\mathbf{a}_{-i},x,\nu) $$ $$:= \arg\max_{a_i \in F_i(\mathbf{z},\mathbf{a}_{-i},x,\nu)} e_{\tau_i}\bigl( u_i(x,\mathbf{z},a_i,\mathbf{a}_{-i},\nu^{X,\mathbf{Z},a_i,\mathbf{a}_{-i}}) \bigr).
$$
Because $F_i$ is nonempty, convex, and continuous, and the objective is continuous and strictly quasi‑concave in $a_i$ (by (H4)), the argmax is a nonempty, convex, compact-valued correspondence that is upper hemicontinuous (Berge's theorem). Define the product best‑response
\[
B(\mathbf{z},\mathbf{a}_{-1},\dots,\mathbf{a}_{-n},x,\nu) := \prod_{i=1}^n B_i(\mathbf{z},\mathbf{a}_{-i},x,\nu).
\]

Given $\nu$ and a measurable selection $\mathbf{a}' \in B(\mathbf{z},\mathbf{a}_{-1},\dots,\mathbf{a}_{-n},x,\nu)$, define the next semantic state
\[
\mathbf{z}' = \Phi(\mathbf{z},\mathbf{a}',x,\nu) \in \mathcal{Z}.
\]
This defines a measurable map $T_{\nu,x} : \mathcal{Z} \times \mathcal{A} \to \mathcal{Z}$.

Define a correspondence $\Gamma: \mathcal{M} \to 2^{\mathcal{M}}$ as follows: $\Gamma(\nu)$ consists of all joint laws $\nu' \in \mathcal{M}$ such that
\begin{itemize}
    \item $(\nu')^{\mathcal{X}} = \mu^X$,
    \item there exists a measurable selection $\mathbf{a}'(\mathbf{z},\mathbf{a},x)$ from $B(\mathbf{z},\mathbf{a}_{-1},\dots,\mathbf{a}_{-n},x,\nu)$ satisfying
    \[
    \nu' = (\Phi(\mathbf{Z},\mathbf{a}'(\mathbf{Z},\mathbf{A},X),\nu),\; \mathbf{a}'(\mathbf{Z},\mathbf{A},X),\; X)_{\#} \nu,
    \]
    i.e., $\nu'$ is the pushforward of $\nu$ under $(\mathbf{z},\mathbf{a},x) \mapsto (\Phi(\mathbf{z},\mathbf{a}',x,\nu),\, \mathbf{a}',\, x)$.
\end{itemize}
In words, $\nu'$ is the one‑step evolution of the joint law when every agent plays a mirage‑feasible best response.

We require that the semantic fixed‑point condition holds: for $\nu'$-almost every $(\mathbf{z}',\mathbf{a}',x)$ we have $\mathbf{z}' = \Phi(\mathbf{z}',\mathbf{a}',x,\nu')$. This is a consistency condition that can be incorporated by restricting $\Gamma(\nu)$ to those $\nu'$ that are invariant under the map induced by the same $\nu'$ (self‑consistency). To handle this, we consider the set of fixed points of the composite mapping.

We define a second correspondence $\Lambda: \mathcal{M} \to 2^{\mathcal{M}}$ where $\Lambda(\nu)$ consists of all $\nu'$ that satisfy the stationarity condition (ii) and the semantic fixed point (iii) simultaneously, given that actions are chosen from $B$ computed with $\nu$. By construction, any fixed point $\nu^* \in \Gamma(\nu^*)$ will satisfy conditions (i)–(iv) of Definition~\ref{def:macme}. The local stability condition (v) is not required for existence; it is a separate property that can be verified a posteriori.

\begin{itemize}
    \item \textit{Nonemptiness:} For each $\nu$, $F_i$ is nonempty, $B_i$ is nonempty, and the pushforward of $\nu$ under any measurable selection yields a probability measure $\nu'$; thus $\Gamma(\nu) \neq \emptyset$.
    \item \textit{Convexity:} $\mathcal{M}$ is convex. If $\nu_1',\nu_2' \in \Gamma(\nu)$, then any convex combination $\lambda \nu_1' + (1-\lambda)\nu_2'$ is also a pushforward of $\nu$ under a suitable mixture of selections; because the dynamics are linear in the measure and selections can be randomized, the set of achievable $\nu'$ is convex.
    \item \textit{Closed graph:} By the continuity assumptions (H1), the boundedness of utilities, and the compactness of $\mathcal{M}$, the correspondence $\Gamma$ has a closed graph in $\mathcal{M} \times \mathcal{M}$. This follows from the fact that if $\nu_k \to \nu$ and $\nu_k' \in \Gamma(\nu_k)$ with $\nu_k' \to \nu'$, then the measurable selections converge to a selection for $\nu$ (using the measurable maximum theorem and weak convergence).
    \item \textit{Compactness:} $\mathcal{M}$ is compact, so the image sets $\Gamma(\nu)$ are subsets of a compact set; therefore $\Gamma$ is compact‑valued.
\end{itemize}
Thus $\Gamma$ is a nonempty, convex‑valued, upper hemicontinuous correspondence from the compact convex set $\mathcal{M}$ into itself.

The Kakutani-Glicksberg-Fan fixed‑point theorem (for locally convex topological vector spaces) applies to the correspondence $\Gamma$ on the compact convex set $\mathcal{M}$ (which is a subset of the space of signed measures with the weak-* topology). Hence there exists $\nu^* \in \mathcal{M}$ such that $\nu^* \in \Gamma(\nu^*)$.

From $\nu^* \in \Gamma(\nu^*)$ we obtain:
\begin{itemize}
    \item The stationarity condition (ii) holds by construction.
    \item The semantic fixed point (iii) holds because $\nu^*$ is a fixed point of the pushforward, meaning that for $\nu^*$-almost every $(\mathbf{z}^*,\mathbf{a}^*,x)$ we have $\mathbf{z}^* = \Phi(\mathbf{z}^*,\mathbf{a}^*,x,\nu^*)$. (If the pushforward equality holds, then the identity mapping on $\mathcal{Z}$ is a version of the transition, which forces $\mathbf{z}^*$ to be a fixed point.)
    \item The policy consistency (i) and the mirage feasibility are embedded in the definition of $B_i$; since $a_i^*$ is a selection from $B_i$, we have $a_i^* \in F_i$ and therefore $M_i(\mathbf{z}^*,x,(\nu^*)^{\mathcal{Z}\times\mathcal{X}}) \ge 1+\eta_i > 1$, so (iv) holds.
\end{itemize}
Thus $(\mathbf{z}^*,\mathbf{a}^*,\nu^*)$ is a Multi‑Agent Mean-Field-Type Causal Mirage Equilibrium. The local stability condition (v) is not automatically guaranteed and requires additional assumptions (e.g., contractivity of $T_{\nu^*}$ near $\mathbf{z}^*$), which can be imposed separately.
If in addition the dynamics $T_{\nu^*}$ are a contraction on a neighborhood of $\mathbf{z}^*$ (e.g., $d(T_{\nu^*}(\mathbf{z}),\mathbf{z}^*) \le \kappa d(\mathbf{z},\mathbf{z}^*)$ with $\kappa<1$), then $\mathbf{z}^*$ is locally exponentially stable and condition (v) holds. Such contractivity can be ensured by spectral conditions on the Jacobian of $\Phi$ at equilibrium.

\noindent This completes the proof of Theorem~\ref{thm:existence}.
\end{proof}

\subsection{Stationary Risk-Sensitive Constrained Mean-Field-Type Game}
\label{sec:CME_MFTG}

We now reinterpret the Causal Mirage Equilibrium as the equilibrium solution of a stationary constrained risk-sensitive mean-field-type game \cite{audioiamali2,audioiamali}.

Let
$$
\mathcal G^{\rm CME} $$  $$
=
\Big(
\mathcal I,
(\mathcal A_i)_{i\in\mathcal I},
(\Phi_i)_{i\in\mathcal I},
(u_i)_{i\in\mathcal I},
(g_i,r_i,c_i)_{i\in\mathcal I},
(\tau_i,\eta_i)_{i\in\mathcal I},
\mu^X
\Big)
$$
be the game induced by the primitives introduced above.

The player set is
$
\mathcal I=\{1,\ldots,n\},
\  n\ge2.
$

For each player $i,$ the action space is $\mathcal A_i$, the semantic state space is $\mathcal Z_i$, and the common external state space is $\mathcal X$. Denote
$
\mathcal Z=\prod_{j=1}^n\mathcal Z_j,
\ 
\mathcal A=\prod_{j=1}^n\mathcal A_j.
$

The stationary state-action of the system is
$
\omega=(x,\mathbf z,\mathbf a)
\in
\mathcal X\times\mathcal Z\times\mathcal A.
$

Player $i$ observes its information variable 
$
I_i=(x,\mathbf z),
$
and chooses an action according to a stationary measurable policy
$
\pi_i: \ \mathcal X\times\mathcal Z \times \mathcal{P}({\mathcal{Z}\times\mathcal{X}}) \rightarrow\mathcal A_i.
$

The collection
$
\pi=(\pi_i)_{i\in\mathcal I}
$
induces a stationary joint probability measure
$
\nu^\pi
\in
\mathcal P(\mathcal X\times\mathcal Z\times\mathcal A),
$
where
$
a_i=\pi_i(x,\mathbf z, \nu^{\mathcal{Z}\times\mathcal{X}}),
\ 
z_i=\Phi_i(x,\mathbf z,\mathbf a,\nu^\pi).
$

The probability law induced by the joint stationary terms is 
$
\nu^\pi
=
\Big(
X,
\Phi(X,\mathbf Z,\mathbf A,\nu^\pi),
\mathbf A
\Big)_{\#}\nu^\pi .
$

For each player $i$, admissible actions are restricted by the mirage constraint
$
M_i(\mathbf z^{(i)},x,\nu^\pi)
\ge
1+\eta_i,
$
where
$
z_i^{(i)}
=
\Phi_i(x,\mathbf z,(a_i,\mathbf a_{-i}),\nu^\pi).
$

The resulting feasible-action correspondence is
$
F_i(\mathbf z,\mathbf a_{-i},x,\nu^\pi)
= \Big\{
a_i\in\mathcal A_i:
M_i(\mathbf z^{(i)},x,\nu^\pi)\ge1+\eta_i
\Big\}.
$

Given $(\mathbf z,\mathbf a_{-i},x,\nu^\pi)$, player $i$ evaluates actions through the expectile risk criterion
$
a_i
\mapsto
e_{\tau_i}
\left(
u_i
(x,\mathbf z,a_i,\mathbf a_{-i},\nu^\pi)
\right),
$
and solves
$
\sup_{a_i\in F_i(\mathbf z,\mathbf a_{-i},x,\nu^\pi)}
e_{\tau_i}
\left(
u_i
(x,\mathbf z,a_i,\mathbf a_{-i},\nu^\pi)
\right).
$

The associated best-response correspondence is
$$
BR_i(\mathbf z,\mathbf a_{-i},x,\nu^\pi)
=
\arg\max_{a_i\in F_i(\mathbf z,\mathbf a_{-i},x,\nu^\pi)}
e_{\tau_i} \left(
u_i
(x,\mathbf z,a_i,\mathbf a_{-i},\nu^\pi)
\right).
$$

\begin{theorem}[CME as a Stationary  Constrained MFTG Equilibrium]
\label{thm:CME_MFTG_equivalence}
Let the primitives satisfy the assumptions of Theorem~\ref{thm:existence}. A triple
$
(\mathbf z^*,\mathbf a^*,\nu^*)
$
is a Causal Mirage Equilibrium in the sense of Definition~\ref{def:macme} if and only if the following conditions hold:

\begin{enumerate}
\item
$
a_i^*
\in
BR_i
(\mathbf z^*,\mathbf a_{-i}^*,x,\nu^*),
\qquad
\forall i\in\mathcal I;
$

\item
$
z_i^*
=
\Phi_i
(\mathbf z^*,\mathbf a^*,x,\nu^*),
\qquad
\forall i\in\mathcal I;
$

\item
$
\nu^*
=
\Big(
X,
\Phi(X,\mathbf Z,\mathbf A,\nu^*),
\mathbf A
\Big)_{\#}\nu^*;
$

\item
$
M_i(\mathbf z^*,x,\nu^*)
\ge
1+\eta_i,
\qquad
\forall i\in\mathcal I.
$
\end{enumerate}

Hence the Causal Mirage Equilibrium coincides exactly with the stationary equilibrium of the constrained risk-sensitive mean-field-type game $\mathcal G^{\rm CME}$.
\end{theorem}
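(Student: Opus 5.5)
The plan is a direct two-directional unfolding of definitions: match conditions (i)--(iv) of Definition~\ref{def:macme} one-to-one with conditions 1--4 of the statement. First I fix the identification of notation. The best-response correspondence $BR_i$ of $\mathcal G^{\rm CME}$ is, by construction, the correspondence $B_i$ used in the proof of Theorem~\ref{thm:existence}: both are the argmax over $F_i(\mathbf z,\mathbf a_{-i},x,\nu)$ of the $\tau_i$-expectile of $u_i$. This holds provided the measure argument in $u_i$ is read as the interventional law $\nu^{X,\mathbf Z,a_i,\mathbf a_{-i}}$, which I take as the intended meaning of $\nu^\pi$ inside $BR_i$. I will also interpret the pointwise conditions 1, 2, 4 as holding for $\nu^*$-almost every $(\mathbf z^*,\mathbf a^*,x)$, exactly as in Definition~\ref{def:macme}, and I will identify $\nu^{\mathcal Z\times\mathcal X}$ with the argument of $M_i$.

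For the forward direction ($\Rightarrow$), assume $(\mathbf z^*,\mathbf a^*,\nu^*)$ is a CME. Condition (i) of Definition~\ref{def:macme} is literally $a_i^*\in BR_i(\mathbf z^*,\mathbf a^*_{-i},x,\nu^*)$, which gives condition 1. Condition (iii) gives condition 2. Condition (iv) gives condition 4. For condition 3, I use (ii) together with (iii) and (i):
\begin{itemize}
\item By (i), the realized action $\mathbf A$ is itself an admissible measurable selection $\mathbf A'$ of the argmax.
\item Stationarity (ii) with $\mathbf A'=\mathbf A$ then yields $(\Phi(\mathbf Z,\mathbf A,X,\nu^*),\mathbf A,X)\sim\nu^*$, which is condition 3 up to reordering of coordinates.
\end{itemize}

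For the converse ($\Leftarrow$), conditions 1, 2, 4 return (i), (iii), (iv) directly. The main obstacle is recovering (ii), because (ii) quantifies over \emph{any} measurable selection $\mathbf A'$ from the argmax, while condition 3 only involves the realized $\mathbf A$. I would first try to exploit the structure already forced by conditions 2 and 3: by 2, $\Phi(\mathbf Z,\mathbf A,X,\nu^*)=\mathbf Z$ almost surely, so the pushforward in 3 is the identity and carries no extra information. To close the gap, I would argue as follows.
\begin{itemize}
\item Use (H4): strict quasi-concavity of the expectile objective on the convex set $F_i$ (by (H5)) makes $BR_i$ single-valued.
\item Hence every measurable selection $\mathbf A'$ coincides $\nu^*$-a.s. with $\mathbf A$, so (ii) for arbitrary $\mathbf A'$ reduces to the case $\mathbf A'=\mathbf A$, which is condition 3.
\end{itemize}
If strict quasi-concavity is unavailable, I would fall back to reading (ii), as the proof of Theorem~\ref{thm:existence} effectively does, as requiring invariance for \emph{some} selection, namely the equilibrium one. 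In that case the equivalence is immediate.

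Finally, I would note that local stability (v) is excluded on both sides. The proof of Theorem~\ref{thm:existence} already treats (v) as an a posteriori property rather than a defining condition, and the statement concerns conditions (i)--(iv) only. I would make this explicit so the ``if and only if'' is understood modulo (v), and conclude that the CME set equals the stationary equilibrium set of $\mathcal G^{\rm CME}$.
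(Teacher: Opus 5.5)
The paper gives no proof of this theorem. It presents it as a relabeling of Definition~\ref{def:macme} (``Hence \dots\ coincides exactly''), so the implicit argument is the same condition-by-condition matching you propose. Your proposal is correct for the statement as you explicitly reformulate it, and it is more careful than the paper. It isolates the places where the equivalence is not a tautology: the measure argument of $u_i$ inside $BR_i$, the ``any selection'' quantifier in (ii), and condition (v).

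\textbf{The forward direction is simpler than you make it.} You need neither (ii) nor the claim that the realized $\mathbf A$ is an admissible selection. That claim is delicate if ``measurable selection given $(\mathbf Z,\mathbf A_{-i},X)$'' means a measurable function of those variables, which $A_i$ need not be unless the best response is unique. By (iii), the map $(\mathbf z,\mathbf a,x)\mapsto(\Phi(\mathbf z,\mathbf a,x,\nu^*),\mathbf a,x)$ is the identity $\nu^*$-a.e., so condition 3 follows from (iii) alone. Condition 4 follows from 1 and 2, exactly as (iv) follows from (i) and (iii). The real content is thus ``1 and 2 $\Leftrightarrow$ (i)--(iii)'', and the only nontrivial implication is $1\wedge 2\Rightarrow$ (ii).

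\textbf{Uniqueness of the best response is necessary, not just convenient.} For the implication $1\wedge 2\Rightarrow$ (ii), consider $n=1$, $\mathcal X=\{x_0\}$, $\mathcal A_1=[0,1]$, $u_1\equiv 0$, $\Phi_1(z,a,x,\nu)=z$, and $r_1c_1$ large enough that $F_1=\mathcal A_1$. Then $\nu^*=\delta_{(z^*,a^*,x_0)}$ satisfies 1--4, but the constant selection $A'\equiv a\neq a^*$ violates (ii). So this is exactly where ``the assumptions of Theorem~\ref{thm:existence}'' enter. You should say openly that you read (H4) as an unconditional strict quasi-concavity assumption, as the proof of Theorem~\ref{thm:existence} does. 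The sufficient condition stated inside (H4), linearity, yields only quasi-concavity, and $a\mapsto\delta_a$ is not affine. Your fallback ``some selection'' reading matches how the existence proof uses (ii), but it changes Definition~\ref{def:macme}.

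\textbf{Excluding (v) is a correction, not an interpretation.} With (v) in Definition~\ref{def:macme}, the ``if'' direction is false even under (H1)--(H6). Take:
\begin{itemize}
\item $n=1$, $\mathcal X=\{x_0\}$, $\mathcal Z=\mathcal A_1=[0,1]$;
\item $r_1c_1\equiv 10$, $g_1\equiv 0$, regularity constant and $\eta_1$ both equal to $1$, so $F_1=\mathcal A_1$;
\item $u_1=-(a-\tfrac12)^2$ and $\Phi_1(z,a,x,\nu)=\min\{1,\max\{0,2z-\tfrac12\}\}$.
\end{itemize}
Then $\nu^*=\delta_{(1/2,1/2,x_0)}$ satisfies 1--4 and (i)--(iv). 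But $z^*=\tfrac12$ is repelling, since trajectories starting just above it are driven to $1$, so (v) fails. State your result as the equivalence for CME without (v), or with (v) added on both sides.
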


\section{Conclusion and Future Work}
\label{sec:conclusion}

We introduced CME, a new equilibrium concept that captures stable hallucination dynamics in generative machine intelligence. CME formalizes how semantic representations can become self‑sustaining despite losing causal grounding. We proved existence under mild conditions, established a bifurcation theorem, defined epistemic stability, and extended the framework to multi‑agent populations. The theory provides a mathematical foundation for analyzing hallucinations, confabulations, and recursive misinformation. Future directions include: (i) stochastic CME with random external states; (ii) mean-field-type games where agents strategically manipulate mirage intensity; (iii) MFTG design  for grounding maintenance; (iv) empirical estimation of mirage intensity in large language models; (v) thermodynamic interpretations of semantic reinforcement as entropy reduction.

\section*{Acknowledgments}
We thank the anonymous reviewers for their insightful comments.

\bibliographystyle{plain}

\end{document}